\newcommand{\ket}[1]{|{#1}\rangle}
\newcommand{\bra}[1]{\langle{#1}|}
\newcommand{\Tr}{\mathop{\text{Tr}}\nolimits}
\newcommand{\beq}{\begin{equation}}
\newcommand{\eeq}{\end{equation}}
\renewcommand{\Re}{\mathop{\text{Re}}}
\newcommand{\Span}{\mathop{\text{span}}}
\def\e{\mathrm{e}}
\def\i{\mathrm{i}}
\newtheorem{lemma}{Lemma}
\definecolor{dgreen}{rgb}{0,0.5,0}
\definecolor{delete}{cmyk}{0.5,0,0,0}
\begin{document}
\title{Universal Control Induced by Noise}
\author{Christian Arenz} 
\affiliation{Institute of Mathematics, Physics, and Computer Science, Aberystwyth University, Aberystwyth SY23 2BZ, UK}
\author{Daniel Burgarth} 
\affiliation{Institute of Mathematics, Physics, and Computer Science, Aberystwyth University, Aberystwyth SY23 2BZ, UK}
\author{Paolo Facchi}
\affiliation{Dipartimento di Fisica and MECENAS, Universit\`{a} di Bari, I-70126 Bari, Italy}
\affiliation{ INFN, Sezione di Bari, I-70126 Bari, Italy}
\author{Vittorio Giovannetti}
\affiliation{NEST, Scuola Normale Superiore and Istituto Nanoscienze-CNR, I-56126 Pisa, Italy}
\author{Hiromichi Nakazato}
\affiliation{Department of Physics, Waseda University, Tokyo 169-8555, Japan}
\author{Saverio Pascazio}
\affiliation{Dipartimento di Fisica and MECENAS, Universit\`{a} di Bari, I-70126 Bari, Italy}
\affiliation{ INFN, Sezione di Bari, I-70126 Bari, Italy}
\author{Kazuya Yuasa}
\affiliation{Department of Physics, Waseda University, Tokyo 169-8555, Japan}
\date{\today}

\begin{abstract}
On the basis of the quantum Zeno effect it has been recently shown [D. K. Burgarth \textit{et~al.}, Nat.\ Commun.\ \textbf{5}, 5173 (2014)] that a strong amplitude damping process applied locally on a part of a quantum system can have a beneficial effect on the dynamics of the remaining part of the system. 
Quantum operations that cannot be implemented without the dissipation become achievable by the action of the strong dissipative process. 
Here we generalize this idea by identifying decoherence-free subspaces (DFS's) as the subspaces in which the dynamics becomes more complex. 
Applying methods from quantum control theory we characterize the set of reachable operations within the DFS's. 
We provide three examples which become fully controllable within the DFS's while the control over the original Hilbert space in the absence of dissipation is trivial. 
In particular, we show that the (classical) Ising Hamiltonian is turned into a Heisenberg Hamiltonian by strong collective decoherence, which provides universal quantum computation within the DFS's.
Moreover we perform numerical gate optimization to study how the process fidelity scales with the noise strength.
As a byproduct a subsystem fidelity which can be applied in other optimization problems for open quantum systems is developed. 
\end{abstract}
\maketitle

\section{Introduction} 
The interaction of a quantum system with its environment is usually considered to be detrimental for quantum information processing. Quantum features one wants to use for quantum information tasks are washed out quickly so that the implementation of quantum gates becomes noisy. In the last decades, however, it has been observed that sometimes noise can be \textit{beneficial}. Rather than fighting against the environment, dissipative state preparation~\cite{CiracZoller, KrausZoller, MeGiovanna, Plenio} and dissipative quantum computing~\cite{WolfCirac, EisertWolf, BlattZoller} turned out to be valuable alternatives to unitary gate designs. In the context of quantum control theory state preparation and the implementation of unitary gates through the modulation of classical control fields in the presence of a dissipative environment have been studied~\cite{SchmidtCalarco, GlaserKhaneja, SchirmerFloether, KochReich} and the set of reachable operations has been analyzed~\cite{OpenQSC1, MeDaniel}. The environment can be used as a resource to increase the set of operations that can be implemented through the controls \cite{DanielZeno,KochReich2}. If the dissipative process admits some set of states robust against the environmental perturbations, the fidelity for the implementation of a gate within the subspaces spanned is not influenced by the noise and the dynamics there is free from decoherence. The existence of the decoherence-free subspaces (DFS's)~\cite{ZanardiRasetti1, ZanardiRasetti2, ref:Zanardi-PRA1998, LidarWhaley1, LidarWhaley2, ref:LidarBaconKempeWhaley-PRA2000, ref:BaconLidarDFS-PRL2000, ref:Zanardi-PRA2000, ref:BaconLidarDFS-PRA2001, Lidar} and the interplay between weak coherent processes and fast relaxation processes make it possible to implement unitary gates over the steady-state manifold in a noiseless manner~\cite{Zanardi1, Zanardi2, Beige, Oreshkov}. Here we show that such a noise process can even raise the fidelity for implementing a desired gate. The action of the strong dissipation allows the implementation of gate operations which cannot be realized without the help of the dissipation. The complexity of the dynamics is enhanced by the noise.

To show this we build upon the recent results obtained in Ref.~\cite{DanielZeno}. On the basis of the quantum Zeno effect~\cite{PascazioFacchi1} it was shown that frequent projective measurements can enrich the dynamics steered by a set of control Hamiltonians. Consider two control Hamiltonians $H_1$ and $H_2$ which are commutative with each other,
\begin{equation}
[H_1,H_2]=0.
\end{equation}
One is allowed to switch them on and off at will, but can induce only trivial dynamics on the system due to the commutativity.  If one additionally performs frequent projective measurements described by a Hermitian projection $P$ during the control, the system is confined to the subspace specified by the projection $P$ due to the quantum Zeno effect (quantum Zeno subspace~\cite{PascazioFacchi1, PascazioFacchi2}), where the system evolves unitarily (quantum Zeno dynamics~\cite{PascazioFacchi1, artzeno}) according to the projected counterparts of the control Hamiltonians, $PH_1P$ and $PH_2P$.
These projected Hamiltonians do not necessarily commute any more,
\begin{equation}
[PH_1P,PH_2P]\neq0.
\end{equation}
The measurement forces the system to evolve within the Zeno subspace, in which more complex operations can be realized thanks to the noncommutativity. The same effect can be induced by an infinitely strong dissipative process~\cite{Zanardi1,Zanardi2}.
It was shown in Ref.~\cite{DanielZeno} that a strong amplitude damping channel acting only locally on one out of many qubits in a chain typically turns a pair of commuting Hamiltonians into a pair of projected Hamiltonians that allow us to perform universal quantum computation over the whole chain of qubits apart from the projected one.
The amplitude damping acting locally on one qubit out of many, however, is a very special type of noise, and the assumption that it acts only locally seems unrealistic. 
On the other hand, this effect, \textit{noise-induced universal quantum computation}, should arise in more general settings.

In this article, we show that the universal controllability over the system can be achieved with the help of more general noise models, including the ones widely studied in the context of DFS's~\cite{ZanardiRasetti1, ZanardiRasetti2, ref:Zanardi-PRA1998, LidarWhaley1, LidarWhaley2, ref:LidarBaconKempeWhaley-PRA2000, ref:Zanardi-PRA2000, ref:BaconLidarDFS-PRA2001, Lidar, Zanardi1, Zanardi2}.
DFS's will be identified as the equivalent to the quantum Zeno subspaces.
Even if we are originally able to perform only trivial controls by commuting control Hamiltonians, a strong amplitude damping process projects the system onto DFS's, where we achieve universal controllability over the system. 
We characterize the set of reachable operations within DFS's and provide examples for which universal sets of gates can be implemented. Moreover, we perform numerical gate optimization to study how strong the dissipative process needs to be to implement such gates with high precision. As a byproduct a new fidelity function which can be applied in other optimization problems for open quantum systems is developed.

\section{Basic Concepts}
\label{sec:dynamicald}
\subsection{DFS's}
DFS's can be exploited as a passive strategy for protecting quantum information against noise~\cite{DCBook}. The theory has been developed in terms of interaction Hamiltonians~\cite{ZanardiRasetti1, ZanardiRasetti2, ref:Zanardi-PRA1998, ref:LidarBaconKempeWhaley-PRA2000, ref:BaconLidarDFS-PRL2000} as well as of quantum dynamical semigroups~\cite{LidarWhaley1, LidarWhaley2, Lidar, ref:BaconLidarDFS-PRA2001}. Many experiments, such as~\cite{White, Steinberg, Weinberg, Kwiat}, demonstrate the importance of DFS's for noiseless quantum computation. 
An experimental setup in waveguide QED has also been discussed recently \cite{ref:Kimble} and we will comment on it in Sec. \ref{sec:Numerical gate optimization}.
Moreover the combinations with error correcting schemes~\cite{LidarWhaley2} and dynamical strategies for decoherence control~\cite{PhysRevA.58.2733,PhysRevLett.82.2417,Zanardi199977,PhysRevA.66.012307,ref:BBZeno,ref:ControlDecoZeno} are promising possibilities for robust quantum information processing~\cite{WuLidar}.

A DFS can be seen as a degenerate pointer basis, which is invariant against the dissipative process. Consider a purely dissipative dynamics described by the Lindbladian generator
\begin{align}
\label{eq:Lindbladian}
\mathcal{D}(\rho)=-\sum_{j=1}^{d^{2}-1}\gamma_{j}(L_{j}^{\dagger}L_{j}\rho+\rho L_{j}^{\dagger}L_{j}-2L_{j}\rho L_{j}^{\dagger}),
\end{align}
with $\rho$ the density operator of the system, $L_{j}$ the Lindblad operators acting on the system, and $\gamma_{j}$ non-negative constants. Here we restrict ourselves to a finite-dimensional quantum system with Hilbert space $\mathcal H$ of dimension $d$ and write $S(\mathcal H)$ for the state space of $\mathcal H$. A DFS $\mathcal H_{\text{DFS}}^{(i)}\subset \mathcal H$ is spanned by $\{\ket{\psi_{1}^{(i)}},\ldots,\ket{\psi_{d_i}^{(i)}}\}$ characterized by
\begin{multline}
L_{j}\ket{\psi_{k}^{(i)}}=\lambda_{j}^{(i)}\ket{\psi_{k}^{(i)}},\quad
G\ket{\psi_{k}^{(i)}}=b^{(i)}\ket{\psi_{k}^{(i)}}
\\
(j=1,\ldots,d^{2}-1;\,k=1,\ldots,d_i),
\end{multline}
with $G=\sum_{j=1}^{d^{2}-1}\gamma_{j}L_{j}^{\dagger}L_{j}$, $\lambda_j^{(i)}$ complex, and $b^{(i)}=\sum_{j=1}^{d^{2}-1}\gamma_{j}|\lambda_j^{(i)}|^2$~\cite{Whaley}. Clearly if we prepare the system in an initial state $\rho_0\in S(\mathcal H_{\text{DFS}}^{(i)})$, this state is protected from dissipation driven by the dissipator $\mathcal{D}$ in~\eqref{eq:Lindbladian}. 
We denote by 
$\mathcal P$ the (super)projection (which is not necessarily self-dual) onto the steady-state manifold which consists of all quantum states $\rho$ satisfying $\mathcal D(\rho)=0$. 
We assume that the steady states are attractive, i.e.,
\begin{align}
\label{eq:superprocetor}
\lim_{t\to\infty}\e^{\mathcal{D}t}=\mathcal P, 
\end{align} 
to which we refer as the long-time/strong-damping limit. In practice, the strong dissipative process quickly destroys the quantum coherence along a given set of directions. 

\subsection{Quantum Control}
Having introduced the concept of DFS's we briefly review some results from quantum control theory. Consider a quantum system described by a Hamiltonian $H_0$, which suffers from dissipation described by the dissipator $\mathcal{D}$ in~\eqref{eq:Lindbladian}.
We try to steer the system by modulating external fields $\{f_{1}(t),\ldots,f_{m}(t)\}$ to switch on and off control Hamiltonians $\{H_{1},\ldots,H_{m}\}$. The evolution of the system is generated by 
\begin{equation}
\label{eq:totalgenerator}
\mathcal L_t(\rho)=-\i[H(t),\rho]+\mathcal D(\rho),
\end{equation}
with
\begin{equation}
H(t)=H_{0}+\sum\limits_{\ell=1}^{m}f_\ell(t)H_\ell.
\label{eqn:HT}
\end{equation}
$H_{0}$ is a drift Hamiltonian, and we do not have access to it. It is known~\cite{BookDalessandro} that in the absence of the dissipator $\mathcal D$, every unitary operation in the closure of the \textit{dynamical Lie group} $\e^{\mathfrak L}$ can be implemented with arbitrarily high precision, with 
\begin{equation}
\mathfrak L=\mathfrak{Lie}(\i H_{0},\i H_{1},\ldots,\i H_m)
\end{equation}
being the real Lie algebra formed by real linear combinations of the operators $\i H_0, \i H_1,\ldots,\i H_m$ and of their iterated commutators.  If $\mathfrak L\supseteq\mathfrak{su}(d)$ (for traceless operators), where $\mathfrak{su}(d)$ is the special unitary algebra, the system is said to be fully controllable, that is, every unitary can be implemented up to a global phase.

\section{Noise-Induced Universal Quantum Computation}
Our question is the following.
Suppose that the Lie algebra $\mathfrak{L}$ generated by our Hamiltonians $\{H_0,H_1,\ldots,H_m\}$ is strictly smaller than $\mathfrak{su}(d)$ and only limited unitaries are realizable by our control in the absence of the dissipation $\mathcal{D}$.
How is the set of reachable operations enlarged by the action of a strong dissipation $\mathcal D$ on the system?

To this end we need to know how the system evolves under the influence of the strong dissipation $\mathcal{D}$~\cite{Zanardi1, Zanardi2}.
To begin with we consider the situation in which no drift term $H_{0}$ is present and the dissipator $\mathcal D$ can be switched on and off arbitrarily as well as the control Hamiltonians $\{H_1,\ldots,H_m\}$. Afterwards we discuss the case in which we have no control over the dissipative part $\mathcal{D}$ and the drift Hamiltonian $H_0$, assuming that the control fields are all constant. Finally this leads to the general case~\eqref{eq:totalgenerator}.

If we are allowed to control $\mathcal D$ arbitrarily, we can switch rapidly between $\mathcal P$ and a unitary evolution that is generated by $\mathcal K_{c}=-\i [H_{c},{}\bullet{}]$ with some $H_{c}\in\{H_0,H_1,\ldots,H_m\}$ and in the limit of infinitely frequent switching
\begin{equation}
\label{eq:zeno}
\lim_{n\to\infty}(\mathcal P \e^{\mathcal K_c t/n}\mathcal P)^{n}=\e^{\mathcal P \mathcal K_c \mathcal P t }\mathcal P.
\end{equation}
It can be shown~\cite{Victor,Zanardi1} that
\begin{equation}
\label{eq:HermetianPDFS}
(\mathcal P\mathcal K_{c} \mathcal P)(\rho)=-\i [P_{i}H_{c}P_{i},\rho],\quad\forall \rho\in S(\mathcal H_{\text{DFS}}^{(i)}),
\end{equation}
where $P_{i}=\sum_{k=1}^{d_i}\ket{\psi_k^{(i)}}\bra{\psi_k^{(i)}}$ is the Hermitian projection on the $i$th DFS\@. Clearly this implies that if we prepare the system in a DFS, say in the $i$th DFS, it remains there evolving unitarily with the projected Hamiltonian $P_{i}H_{c}P_{i}$. Furthermore if the evolution generated by $\mathcal D$ is unital, i.e., $\mathcal D(\openone)=0$, the system evolves over the steady state manifold according to $\mathcal P\mathcal K_{c}\mathcal P=-\i [\mathcal P(H_{c}),{}\bullet{}]$, and for an Abelian interaction algebra~\cite{Kribs}, generated by the $L_{j}$'s in~\eqref{eq:Lindbladian} and their conjugates, we have $\mathcal P(H_{c})=\sum_{i}P_{i}H_{c} P_{i}$~\cite{Zanardi1}. 
The mechanism is similar to that of the quantum Zeno subspaces induced by other means, such as frequent measurements, strong continuous couplings, and frequent unitary kicks~\cite{PascazioFacchi1,PascazioFacchi2,ref:ControlDecoZeno}. The projective measurement is effectively performed by the dissipative process.
The measurement is nonselective~\cite{schwinger}: the transitions among different subspaces are hindered and the dynamics within each subspace is governed by the projected Hamiltonian $P_{i}H_{c}P_{i}$.

So far we have discussed the case in which the dissipator $\mathcal{D}$ as well as the control Hamiltonians $\{H_1,\ldots,H_m\}$ can be controlled arbitrarily, in the absence of the drift Hamiltonian $H_0$. Typically one has no access to the dissipative part $\mathcal{D}$ in~\eqref{eq:totalgenerator} that arises for example from an interaction with the environment. If we assume that the control fields are all constant, the generator~\eqref{eq:totalgenerator} including the drift Hamiltonian $H_0$ reads 
\begin{equation}
\mathcal L=g\mathcal K+\mathcal D,
\end{equation} 
where we have introduced the constant $g$ that measures the strength of the coherent part $\mathcal K=-\i [H,{}\bullet{}]$ in comparison with the dissipative part $\mathcal{D}$. Based on a perturbative expansion it has been shown~\cite{Zanardi1, Zanardi2} that 
\begin{equation}
\label{eq:ProjectionTHM}
\|(\e^{t\mathcal L}-\e^{gt\mathcal P\mathcal K \mathcal P})\mathcal P\|\leq O(g\tau_{R}),
\end{equation}
where $\tau_{R}^{-1}=\min_{h>0}|{\Re}\{\lambda_{h}\}|$, with $\lambda_{h}$ the nonvanishing eigenvalues of $\mathcal D$, defines the longest relaxation time scale $\tau_{R}$. The norm is the usual operator norm and $gt=O(1)$. Thanks to this, we notice that on a time scale on which the dissipative dynamics is much faster than the coherent dynamics, the dynamics is effectively governed by~\eqref{eq:zeno}. Similarly to~\eqref{eq:HermetianPDFS}, if the system is initially prepared in a DFS, say in the $i$th DFS, the system evolves unitarily within the same DFS in the limit $g\tau_{R}\to 0$ with $gt=O(1)$, driven by the projected Hamiltonian $P_{i}HP_{i}$. 
Again, this is intuitively clear: if the dynamics is dominated by the fast dissipative process, the latter defines the subspaces within which the system can evolve. The presence of the coherent component $\mathcal K$ only modifies the motion within each subspace.

It is now easy to treat the general case~\eqref{eq:totalgenerator}. In the spirit of the Trotter formula, by switching among the control Hamiltonians under $g\tau_{R}\to 0$ and $gt=O(1)$, we can implement with arbitrarily high precision every $U_i=\e^{\mathfrak{L}_{\text{DFS}}^{(i)}}$ in the relevant DFS, with 
\begin{equation}
\label{eq:LiealgDFS}
\mathfrak{L}_{\text{DFS}}^{(i)}=\mathfrak{Lie}(\i P_{i}H_{0}P_{i},\i P_{i}H_{1}P_{i},\ldots,\i P_{i}H_mP_{i})
\end{equation}
being the real Lie algebra generated by the drift Hamiltonian $H_0$ and the control Hamiltonians $\{H_1,\ldots,H_m\}$ projected by the projection $P_i$. Note that for a unital evolution $\e^{\mathcal D t}$ the Lie algebra over the DFS's reads 
\begin{align}
\mathfrak L_{\text{DFS}}=\mathfrak{Lie}\bm{(}\i\mathcal P(H_{0}),\i\mathcal P(H_{1}),\ldots,\i\mathcal P(H_{m})\bm{)}. 	
\end{align}

The projection $P_{i}$ can now be identified as the equivalent of the frequent projective measurement that projects the system onto the quantum Zeno subspace specified by $P_i$: the strong dissipation does the same job as the Zeno measurement. In the strong-damping limit the system is confined in the DFS's, evolving unitarily and steered by the projected Hamiltonians.

Although the dimensions of the DFS's are smaller than the dimension of the original Hilbert space, the dynamics induced by the projected control Hamiltonians within the DFS's can be much more complex than the one induced by the original control Hamiltonians in the absence of the dissipation, since $\dim\mathfrak L_{\text{DFS}}$ is in general larger than $\dim\mathfrak L$~\cite{DanielZeno}. One can even achieve the universal controllability over the DFS's, with the help of the strong dissipation.

\section{Universal control in DFS's: Examples}
On the basis of the observation that the projected drift and control Hamiltonians do not necessarily commute any more, we saw in the last section that the Lie algebra over the DFS's might be larger than the Lie algebra over the original Hilbert space.  In the following we present three different examples, for which the universal controllability over the DFS's is achieved, even though only ``simple'' operations can be implemented over the original Hilbert space in the absence of dissipation.

\subsection{Two Qubits}
\label{ExamplesTwoQ}
We first provide a simplest example with only two qubits, which is essentially the same as that presented in Ref.~\cite{DanielZeno}: one of the two qubits, say qubit 2, is subject to a strong amplitude-damping process.
We also discuss the same model but with a pure dephasing process on qubit 2, instead of the amplitude-damping process.

The drift Hamiltonian reads 
 \begin{subequations}
 \begin{equation}
 \label{eq:driftHamiltonianTwoQ}
 H_{0}=\sigma_{x}\otimes (\sigma_{x}+\sigma_{z}),
 \end{equation}
 while we have a control Hamiltonian 
 \begin{equation}
 H_{1}=\sigma_{y}\otimes (\sigma_{x}-\sigma_{z}),
 \end{equation}
 \end{subequations}
where $\sigma_\alpha$ ($\alpha=x,y,z$) are the Pauli operators.
Note that these Hamiltonians commute with each other, $[H_{0}, H_{1}]=0$.
Therefore in the absence of noise the Lie algebra $\mathfrak{L}=\mathfrak{Lie}(\i H_0,\i H_1)$ is spanned just by $\{\i H_{0},\i H_{1}\}$ and hence is only two dimensional, $\dim\mathfrak{L}=2$. We now add amplitude-damping on qubit 2, generated by 
 \begin{equation}
 \label{eq:amplitudedamping2}
 \mathcal D(\rho)=-\gamma(\sigma_{+}^{(2)}\sigma_{-}^{(2)}\rho+\rho\sigma_{+}^{(2)}\sigma_{-}^{(2)}-2\sigma_{-}^{(2)}\rho\sigma_{+}^{(2)}),
 \end{equation}
with $\sigma_{\pm}^{(2)}=\openone\otimes(\sigma_x\pm \i\sigma_y)/2$ the raising and lowering operators acting nontrivially only on qubit 2. 
It projects the system as~\cite{ref:SolutionLindblad}
\begin{align}
\e^{\mathcal{D}t}\rho
&=(P+Q\e^{-\gamma t})\rho(P+Q\e^{-\gamma t})
+(1-\e^{-2\gamma t})L\rho L^\dag
\nonumber\\
&\xrightarrow{\gamma t\to\infty}
 \mathcal P(\rho)=P\rho P+L\rho L^{\dagger},
\label{eqn:PTwoQubitAmp}
\end{align}
where $P=\openone\otimes\ket{0}\bra{0}$, $Q=\openone\otimes\ket{1}\bra{1}$, and $L=\sigma_-^{(2)}=\openone\otimes\ket{0}\bra{1}$ with $\ket{0}$ and $\ket{1}$ being the eigenstates of  $\sigma_{z}$ belonging to the eigenvalues $-1$ and $+1$, respectively.
The dissipator~(\ref{eq:amplitudedamping2}) admits a single DFS identified by the Hermitian projection $P$ onto
\begin{equation}
\mathcal{H}_\text{DFS}
=\Span\{\ket{0}\otimes\ket{0},\ket{1}\otimes\ket{0}\}.
\end{equation} 
In the strong-damping limit our Hamiltonians are projected to
\begin{subequations}
\begin{gather}
PH_0P=-\sigma_x\otimes\ket{0}\bra{0},\\
PH_1P=\sigma_y\otimes\ket{0}\bra{0},
\end{gather}
\end{subequations}
and the Lie algebra over the DFS is given by 
 \begin{equation}
 \mathfrak L_{\text{DFS}}
=\mathfrak{Lie}(\i PH_0P,\i PH_1P)
=\mathfrak{su}(2)\otimes\ket{0}\bra{0}.
\end{equation}
That is, in the strong-damping limit qubit 1 becomes fully controllable, i.e., every $U\in \text{SU}(2)$ can be implemented on qubit 1.

Now let us replace the amplitude-damping process on qubit 2 by a pure dephasing process generated by
\begin{equation}
\label{eq:dephasing}
\mathcal D(\rho)=-\gamma[\sigma_{z}^{(2)},[\sigma_{z}^{(2)},\rho]],
\end{equation}
where $\sigma_{z}^{(2)}=\openone\otimes\sigma_{z}$. In this case the system is projected as~\cite{ref:SolutionLindblad}
\begin{align}
\e^{\mathcal{D}t}\rho
&=
P_0\rho P_0
+P_1\rho P_1
+P_0\rho P_1\e^{-4\gamma t}
+P_1\rho P_0\e^{-4\gamma t}
\nonumber\\
&\xrightarrow{\gamma t\to\infty}
\mathcal P(\rho)=P_{0}\rho P_{0}+P_{1}\rho P_{1},
\end{align}
where $P_{i}=\openone\otimes\ket{i}\bra{i}$ ($i=0,1$).
This dephasing process admits two orthogonal DFS's identified by the Hermitian projections $P_0$ and $P_1$,
\begin{subequations}
\label{eq:DFSdephasing}
\begin{align}
\mathcal H_{\text{DSF}}^{(0)}&=\Span\{\ket{0}\otimes\ket{0},\ket{1}\otimes \ket{0}\},\\
\mathcal H_{\text{DFS}}^{(1)}&=\Span\{\ket{0}\otimes \ket{1},\ket{1}\otimes \ket{1}\}.
\end{align}
\end{subequations}
Since the evolution generated by~\eqref{eq:dephasing} is unital, in the strong-dephasing limit our Hamiltonians are projected to
\begin{subequations}
\begin{gather}
\mathcal{P}(H_0)=\sigma_x\otimes\sigma_z,\\
\mathcal{P}(H_1)=-\sigma_y\otimes\sigma_z,
\end{gather}
\end{subequations}
and the Lie algebra over the DFS's $\mathfrak L_{\text{DFS}}
=\mathfrak{Lie}\bm{(}\i\mathcal{P}(H_0),\i\mathcal{P}(H_1)\bm{)}$ is spanned by $\{\sigma_x\otimes\sigma_z,\sigma_y\otimes\sigma_z,\sigma_z\otimes\openone\}$: its dimension is $\dim\mathfrak{L}_\text{DFS}=3$ and is increased from $\dim\mathfrak{L}=2$ by the action of the strong pure dephasing on qubit 2.
In particular, if qubit 2 starts from the state $\ket{i}$ ($i=0$ or $1$) the Lie algebra over the $i$th DFS reads 
\begin{equation}
 \mathfrak L_{\text{DFS}}^{(i)}
=\mathfrak{Lie}(\i P_i H_0P_i,\i P_i H_1P_i)
=\mathfrak{su}(2)\otimes\ket{i}\bra{i},
\end{equation}
and qubit 1 is fully controllable.
Although in this case we do not have the full controllability over all DFS's, universal quantum computation is possible on qubit 1 within either of the two DFS's. We see that using the framework of DFS's the previous results on amplitude damping channels extend naturally to other types of noise.

\subsection{$\bm{N}$-Level Atom with an Unstable Level}
The next example involves an atom with energy eigenstates $\ket{1},\ldots,\ket{N}$ plus a higher lying unstable state $\ket{e}$ that decays to the lower lying states with rates $\gamma_{1},\ldots,\gamma_{N}$, as schematically represented in Fig.\ \ref{fig: schematicalRepNAT}.
We assume that $N\ge2$.
A similar level structure manifests for example in a Rydberg atom, for which the quantum Zeno dynamics has recently been demonstrated in an impressive way~\cite{Haroche}. 
  \begin{figure}[t]
  \includegraphics[width=0.9\columnwidth]{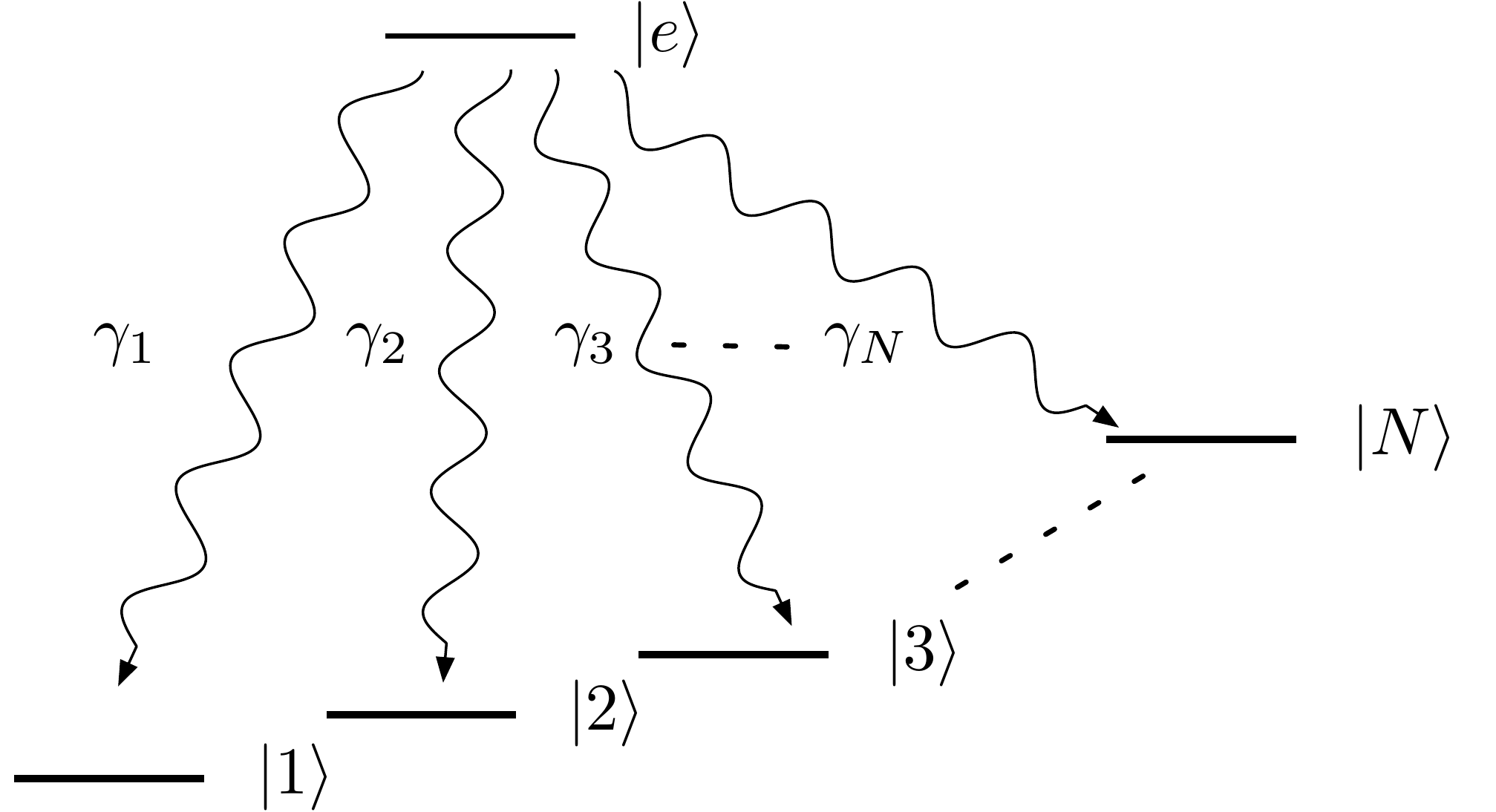}
  \caption{\label{fig: schematicalRepNAT} Schematic representation of an $N$-level atom with a higher lying unstable level $\ket{e}$ that decays with rates $\gamma_{1},\ldots,\gamma_{N}$ to the lower lying levels $\ket{1},\ldots,\ket{N}$ spanning a DFS\@.}
\end{figure}

We will consider a decay process described by
\begin{equation}
\label{eq:lindbladianNL}
\mathcal D(\rho)=-\sum_{j=1}^{N}\gamma_{j}(L_j^{\dagger}L_j\rho+\rho L_j^{\dagger}L_j-2L_j\rho L_j^\dag)
\end{equation}
with $L_j=\ket{j}\bra{e}$ ($j=1,\ldots,N$). 
The system is projected as~\cite{ref:SolutionLindblad}
\begin{align}
\e^{\mathcal{D}t}\rho
&=
(P+Q\e^{-\Gamma t})
\rho
(P+Q\e^{-\Gamma t})
\nonumber\\
&\qquad\qquad\,
{}+\frac{1}{\Gamma}(1-\e^{-2\Gamma t})
\sum\limits_{j=1}^{N}\gamma_{j}L_{j}\rho L_{j}^{\dagger}
\nonumber\\
&\xrightarrow{\Gamma t\to\infty}
\mathcal P(\rho)=P\rho P+\frac{1}{\Gamma}\sum\limits_{j=1}^{N}\gamma_{j}L_{j}\rho L_{j}^{\dagger},
\label{eq:superprojectionNLevel}
\end{align}
where $P=\openone-\ket{e}\bra{e}$, $Q=\ket{e}\bra{e}$, and $\Gamma=\sum_{j=1}^N\gamma_j$.
The dissipator~(\ref{eq:lindbladianNL}) admits a DFS identified by the Hermitian projection $P$, namely, spanned by the lower lying levels 
\begin{equation}
\mathcal{H}_\text{DFS}
=\Span\{\ket{1},\ldots,\ket{N}\}.
\end{equation}

Now we are going to introduce a drift Hamiltonian and a control Hamiltonian. We take an example from Ref.~\cite{DanielPas}, for which the universal control is achieved through frequent projective measurements described by a Hermitian projection $P$. Note that here $P$ is realized through the strong-damping limit of the completely positive and trace-preserving (CPTP) map that is generated by the dissipator~\eqref{eq:lindbladianNL}.  The drift Hamiltonian 
\begin{subequations}
\begin{equation}
H_{0}=\ket{e}\bra{2}+\ket{2}\bra{e}
+\sum_{j=1}^{N-1}(\ket{j}\bra{j+1}+\ket{j+1}\bra{j})
\end{equation}
consists of the interactions among the lower lying levels $\{\ket{1},\ldots,\ket{N}\}$ and additional driving terms stimulating the transitions between $\ket{e}$ and $\ket{2}$. The control Hamiltonian, on the other hand, reads
\begin{equation}
H_{1}=\ket{e}\bra{e}+\ket{1}\bra{1}-(\ket{e}\bra{1}+\ket{1}\bra{e}).
\end{equation}
\end{subequations}
Again, these Hamiltonians commute with each other, $[H_{0}, H_{1}]=0$.
Therefore in the absence of the noise $\mathcal{D}$ the Lie algebra $\mathfrak{L}=\mathfrak{Lie}(\i H_0,\i H_1)$ is spanned just by $\{\i H_{0},\i H_{1}\}$ and hence is only two dimensional, $\dim\mathfrak{L}=2$, as in the previous example.
These Hamiltonians are projected by the strong dissipation~(\ref{eq:superprojectionNLevel}) to
\begin{subequations}
\begin{gather}
PH_{0}P=\sum_{j=1}^{N-1}(\ket{j}\bra{j+1}+\ket{j+1}\bra{j}),\\
PH_{1}P=\ket{1}\bra{1}.
\end{gather}
\end{subequations}
This pair of Hamiltonians is known to generate the full unitary algebra $\mathfrak u(N)$ (see e.g.~\cite{ref:QSI}).
We get
\begin{equation}
 \mathfrak L_{\text{DFS}}
=\mathfrak{Lie}(\i PH_0P,\i PH_1P)
=\mathfrak u(N) P.
\end{equation} 
Its dimension is $\dim\mathfrak L_{\text{DFS}}=N^{2}$, while $\dim\mathfrak L=2$ in the absence of the dissipation. Compared to the previous two-qubit example we observe here a more dramatic increase of the complexity in the dynamics over the DFS through projection.

\subsection{Ising Chain of $\bm{N}$ Qubits under Collective Decoherence}
\label{sec:Chain}
The third example is a chain of $N$ qubits 
interacting with each other via nearest-neighbor Ising-type couplings,
\begin{subequations}
\label{eqn:Ising}
\begin{equation}
H_0
=\sum_{n=1}^{N-1}\sigma_z^{(n)}\sigma_z^{(n+1)},
\label{eqn:Ising0}
\end{equation}
where $\sigma_\alpha^{(n)}=\openone\otimes\cdots\otimes\openone\otimes\sigma_\alpha\otimes\openone\otimes\cdots\otimes\openone$ ($\alpha=x,y,z$) are the Pauli operators acting on the $n$th qubit.
We assume that $N\ge3$.
In addition we are allowed to switch on and off the coupling between the first two qubits,
\begin{equation}
H_1=\sigma_z^{(1)}\sigma_z^{(2)}.
\label{eqn:Ising1}
\end{equation}
\end{subequations}
These Hamiltonians trivially commute with each other, $[H_0,H_1]=0$, and our control over the chain of qubits is very poor.
Suppose then that this system undergoes a strong collective decoherence described by the Lindbladian generator
\begin{equation}
\mathcal{D}(\rho)=-\sum_{\alpha=x,y,z}\gamma_\alpha(S_\alpha^2\rho+\rho S_\alpha^2-2S_\alpha\rho S_\alpha),
\label{eqn:CollectiveDeco}
\end{equation}
that is unital, where
\begin{equation}
S_\alpha=\frac{1}{2}\sum_{n=1}^N\sigma_\alpha^{(n)}
\quad
(\alpha=x,y,z)
\end{equation}
are the collective spin operators. 
This noise model is well studied in the context of DFS's, and is known to admit multiple DFS's labeled by the total spin $J$ of the whole chain [i.e., $J$ gives the total spin angular momentum of the chain by $\bm{S}^2=\sum_{\alpha=x,y,z}S_\alpha^2=J(J+1)$]~\cite{ref:BaconLidarDFS-PRL2000,ref:BaconLidarDFS-PRA2001,Zanardi1}.
The dimensions of the DFS's are given by~\cite{ref:MandelWolf} 
\begin{equation}
d_{J,N}=\frac{(2J+1)N!}{(N/2+J+1)!(N/2-J)!},
\label{eqn:dJN}
\end{equation}
and are listed in Table \ref{tab:dJN} for small numbers of qubits $N$.

To see how our Hamiltonians $H_0$ and $H_1$ are projected by the collective decoherence $\Lambda_t=\e^{\mathcal{D}t}$ in the strong-damping limit, let us look at its dual channel $\Lambda_t^\star=\e^{\mathcal{D}^\star t}$ defined by
\begin{equation}
\Tr\{A\Lambda_t(\rho)\}
=\Tr\{\Lambda_t^\star(A)\rho\},
\end{equation}
for an arbitrary observable $A$ and state $\rho$, and note that $\mathcal{D}^\star=\mathcal{D}$ in this case, since $S_\alpha$ in the generator $\mathcal{D}$ in~(\ref{eqn:CollectiveDeco}) are Hermitian.
By this channel, each component of our Hamiltonians $\sigma_z^{(n)}\sigma_z^{(n+1)}$ evolves according to
\begin{widetext}
\begin{equation}
\mathcal{D}
\begin{pmatrix}
\sigma_x^{(n)}\sigma_x^{(n+1)}\\
\sigma_y^{(n)}\sigma_y^{(n+1)}\\
\sigma_z^{(n)}\sigma_z^{(n+1)}
\end{pmatrix}
=-2\begin{pmatrix}
\gamma_y+\gamma_z&-\gamma_z&-\gamma_y\\
-\gamma_z&\gamma_z+\gamma_x&-\gamma_x\\
-\gamma_y&-\gamma_x&\gamma_x+\gamma_y
\end{pmatrix}
\begin{pmatrix}
\sigma_x^{(n)}\sigma_x^{(n+1)}\\
\sigma_y^{(n)}\sigma_y^{(n+1)}\\
\sigma_z^{(n)}\sigma_z^{(n+1)}
\end{pmatrix},
\label{eqn:DstarMatrix}
\end{equation}
\end{widetext}
and in the strong-damping limit the operators $\sigma_\alpha^{(n)}\sigma_\alpha^{(n+1)}$ ($\alpha=x,y,z$) are projected to

\begin{align}
\Lambda_t
\begin{pmatrix}
\sigma_x^{(n)}\sigma_x^{(n+1)}\\
\sigma_y^{(n)}\sigma_y^{(n+1)}\\
\sigma_z^{(n)}\sigma_z^{(n+1)}
\end{pmatrix}
\xrightarrow{\bar{\gamma}t\to\infty}{}&
\mathcal{P}
\begin{pmatrix}
\sigma_x^{(n)}\sigma_x^{(n+1)}\\
\sigma_y^{(n)}\sigma_y^{(n+1)}\\
\sigma_z^{(n)}\sigma_z^{(n+1)}
\end{pmatrix}
\nonumber\\
={}&
\frac{1}{3}
\begin{pmatrix}
\bm{\sigma}^{(n)}\cdot\bm{\sigma}^{(n+1)}\\
\bm{\sigma}^{(n)}\cdot\bm{\sigma}^{(n+1)}\\
\bm{\sigma}^{(n)}\cdot\bm{\sigma}^{(n+1)}
\end{pmatrix},
\end{align}
where $\bar{\gamma}$ is a characteristic timescale of the decoherence, e.g., the smaller nonvanishing eigenvalue of the matrix in~(\ref{eqn:DstarMatrix}).
The operators become rotationally symmetric by the projection.
In particular, our Hamiltonians $H_0$ and $H_1$ are projected to
\begin{subequations}
\label{eqn:Heisenberg}
\begin{gather}
\mathcal{P}(H_0)
=\frac{1}{3}\sum_{n=1}^{N-1}\bm{\sigma}^{(n)}\cdot\bm{\sigma}^{(n+1)},\\
\mathcal{P}(H_1)
=\frac{1}{3}\bm{\sigma}^{(1)}\cdot\bm{\sigma}^{(2)}.
\end{gather}
\end{subequations}
The Ising chain~(\ref{eqn:Ising}) thus becomes the Heisenberg chain~(\ref{eqn:Heisenberg}) by the projection $\mathcal{P}$.
The projected Hamiltonians are not commutative anymore with each other.

Now we look at the Lie algebra
\begin{equation}
\mathfrak L_{\text{DFS}}
=\mathfrak{Lie}\bm{(}\i\mathcal{P}(H_0),\i\mathcal{P}(H_1)\bm{)}
\end{equation}
generated by the projected Hamiltonians $\mathcal{P}(H_0)$ and $\mathcal{P}(H_1)$.
Recall that the projected Hamiltonians in~(\ref{eqn:Heisenberg}) are rotationally symmetric, reflecting the character of the decoherence model~(\ref{eqn:CollectiveDeco}).
Commutators preserve this rotational symmetry, as we will see below.
Then, all the elements of the Lie algebra $\mathfrak L_{\text{DFS}}$ are rotationally symmetric, and are given in terms of the two- and three-body operators (see Appendix \ref{app:ChainLemmaLiealg} for details)
\begin{align}
&H_{mn}
=\bm{\sigma}^{(m)}\cdot\bm{\sigma}^{(n)},\quad
H_{ijk}
=\bm{\sigma}^{(i)}\cdot(\bm{\sigma}^{(j)}\times\bm{\sigma}^{(k)})
\nonumber\\
&\qquad\,%
(m<n;\ i<j<k;\ m,n,i,j,k=1,\ldots,N).
\label{eqn:TwoThreeBody}
\end{align}

In Ref.~\cite{ref:BaconLidarDFS-PRA2001}, it is proved that any $\text{SU}$ transformations on the DFS's induced by the strong collective decoherence~(\ref{eqn:CollectiveDeco}) can be realized if we are able to apply \textsc{swap} interactions between any pair of qubits.
Note that the \textsc{swap} Hamiltonians can be constructed from the rotationally symmetric two-body operators $H_{mn}=\bm{\sigma}^{(m)}\cdot\bm{\sigma}^{(n)}$: the \textsc{swap} operator $S_{mn}$ swapping the states of qubits $m$ and $n$ is given by $S_{mn}=(1+\bm{\sigma}^{(m)}\cdot\bm{\sigma}^{(n)})/2$.
Since we have proven in Appendix \ref{app:ChainLemmaLiealg} that all the rotationally symmetric two-body operators $H_{mn}=\bm{\sigma}^{(m)}\cdot\bm{\sigma}^{(n)}$ can be generated by the projected Hamiltonians $\mathcal{P}(H_0)$ and $\mathcal{P}(H_1)$, the \textsc{swap} Hamiltonians $S_{mn}$ between any pair of qubits can be applied, and by the theorem proved in Ref.~\cite{ref:BaconLidarDFS-PRA2001} all the generators  of $\bigoplus_J\mathfrak{su}(d_{J,N})$ can be constructed.
Namely, 
\begin{equation}
\mathfrak L_{\text{DFS}}
=\mathfrak{Lie}\bm{(}\i\mathcal{P}(H_0),\i\mathcal{P}(H_1)\bm{)}
\supset\bigoplus_J\mathfrak{su}(d_{J,N}).
\label{eqn:LieChainSU}
\end{equation}
This means that we are able to perform universal quantum computation over all DFS's by the projected Hamiltonians $\mathcal{P}(H_0)$ and $\mathcal{P}(H_1)$.

\begin{table}[b]
\caption{The dimensions $d_{J,N}$ of the DFS's, and the dimension of the Lie algebra $\mathfrak{L}_\text{DFS}=\mathfrak{Lie}\bm{(}\i\mathcal{P}(H_0),\i\mathcal{P}(H_1)\bm{)}$ compared with the dimensions of the $\mathfrak{u}$ and $\mathfrak{su}$ algebras over the DFS's, for small numbers of qubits $N$.}
\label{tab:dJN}
\begin{tabular}{r@{\ }cccccc}
\hline
\hline
&$N=1$&$N=2$&$N=3$&$N=4$&$N=5$&$N=6$\\
$J=0$&&1&&2&&5\\
$J=\frac{1}{2}$&1&&2&&5\\
$J=1$&&1&&3&&9\\
$J=\frac{3}{2}$&&&1&&4\\
$J=2$&&&&1&&5\\
$J=\frac{5}{2}$&&&&&1\\
$J=3$&&&&&&1\\
$\dim\mathfrak{L}_\text{DFS}$
&0&1&4&12&40&129\\
$\sum_J\dim\mathfrak{su}(d_{J,N})$
&0&0&3&11&39&128\\
$\sum_J\dim\mathfrak{u}(d_{J,N})$
&1&2&5&14&42&132\\
\hline
\hline
\end{tabular}
\end{table}

Notice, however, that the full unitary algebra $\bigoplus_J\mathfrak{u}(d_{J,N})$ over the DFS's is not attainable.
For instance, not all the rotationally symmetric four-body operators 
$
(\bm{\sigma}^{(i)}\cdot\bm{\sigma}^{(j)})
(\bm{\sigma}^{(k)}\cdot\bm{\sigma}^{(\ell)})
=H_{ij}H_{k\ell}
$
can be generated.
Combinations of them can be generated by the rotationally symmetric two- and three-body operators through
\begin{equation}
\i [H_{ij},H_{jk\ell}]=2(H_{ik}H_{j\ell}-H_{\i\ell}H_{jk}),
\end{equation}
but we realize that we can generate only differences of four-body operators.
The other commutators such as
\begin{equation}
\i [H_{ijk},H_{ij}H_{k\ell}]
=4(H_{j\ell}-H_{\i\ell})
+2(H_{\i\ell}H_{jk}-H_{ik}H_{j\ell}),
\end{equation}
do not help to break the differences to get a single piece of four-body operator.
This is because commutators yield something antisymmetric with respect to some of the qubits involved in the operators. 
In order to single out each piece of four-body operator from the differences, we need a 
sum of four-body operators, but it is not available or provided through commutators.
We thus cannot generate the full algebra over the DFS's.

See Table \ref{tab:dJN}, where the dimension of the Lie algebra 
$\dim\mathfrak{L}_\text{DFS}$ is compared with the dimension of the $\mathfrak{su}$ algebra $\sum_J\dim\mathfrak{su}(d_{J,N})$ and that of the full unitary algebra $\sum_J\dim\mathfrak{u}(d_{J,N})$ over the DFS's.
The dimension of the Lie algebra $\dim\mathfrak{L}_\text{DFS}$ is indeed larger than $\sum_J\dim\mathfrak{su}(d_{J,N})$, but is smaller than $\sum_J\dim\mathfrak{u}(d_{J,N})$.
Anyway, the dimension of the Lie algebra is greatly enhanced from $\dim\mathfrak{L}=2$, as $\dim\mathfrak{L}_\text{DFS}\simeq4^NN^{-3/2}/\sqrt{\pi}$ for large $N$, as estimated in Appendix \ref{app:LieChain}\@.

In summary, we started with  two commuting Hamiltonians $H_0$ and $H_1$ in (\ref{eqn:Ising}), which are projected to $\mathcal{P}(H_0)$ and $\mathcal{P}(H_1)$ in~(\ref{eqn:Heisenberg}), respectively, by the strong collective decoherence~(\ref{eqn:CollectiveDeco}).
As a consequence, the Ising chain~(\ref{eqn:Ising}) is changed into the Heisenberg chain~(\ref{eqn:Heisenberg}), and our projected Hamiltonians $\mathcal{P}(H_0)$ and $\mathcal{P}(H_1)$ are not commutative anymore with each other.
They generate the full algebra of $\bigoplus_J\mathfrak{su}(d_{J,N})$ on the DFS's. Remarkably the noise is turning the Ising chain (classical) into the Heisenberg chain (quantum), and we are able to perform a universal quantum computation over the DFS's.

\section{Gate optimization and subsystem fidelity}
\label{sec:Numerical gate optimization}
In this section we analyze how the process fidelity scales with the noise strength. To this end we resort to the numerical gate optimization using the quantum control package implemented in QuTip~\cite{webpageQuTip}. We study the two-qubit example discussed in Sec.\ \ref{ExamplesTwoQ}, with the amplitude damping~\eqref{eq:amplitudedamping2} for different values of $\gamma$. For the sake of simplicity the drift Hamiltonian~\eqref{eq:driftHamiltonianTwoQ} is treated as a control Hamiltonian as well.

We wish to optimize the control fields $f_\ell(t)$ [recall~(\ref{eqn:HT})] to implement some goal operation $\mathcal E_{G}$. Denote by $\mathcal E_{T}=\mathop{\text{T}}\exp[\int_{0}^{T}dt^{\prime}\,\mathcal L(t^{\prime})]$ the CPTP map at time $T$, where $\mathcal{L}(t)$ is the Liouvillian given in~(\ref{eq:totalgenerator}) and $\mathop{\text{T}}$ indicates time-ordered product. The optimization is performed to minimize the gate error
\begin{equation}
 \label{eq:gateerror}
 \varepsilon_{1}=\|\mathcal E_{T}-\mathcal E_{G} \|_{\text{HS}}^{2},
\end{equation}
where $\|{}\bullet{}\|_{\text{HS}}$ is the Hilbert-Schmidt norm with $\mathcal E_{G}$ and $\mathcal E_{T}$ being treated as $d^{2}\times d^{2}$ matrices obtained by the row-vectorization of the density operator of a $d$-dimensional system. In general, for two CPTP maps $\Phi_{1}$ and $\Phi_{2}$, the Hilbert-Schmidt norm of the difference between their corresponding matrices 
provides an upper bound $\|\Phi_{1}-\Phi_{2}\|_{\diamond}\leq d\|\Phi_{1}-\Phi_{2}\|_{\text{HS}}$ on the diamond norm $\|{}\bullet{}\|_{\diamond}$. The diamond norm~\cite{Diamond} takes its maximal value $2$ when the two quantum channels $\Phi_{1}$ and $\Phi_{2}$ are perfectly distinguishable. The minimization of~\eqref{eq:gateerror} is done by a gradient-based algorithm~\cite{Grape} dividing the total time $T$ into equidistant time intervals, on which the control fields are piecewise constant.

We are actually interested in the reduced dynamics of system 1, i.e., in the map $\mathcal E_{T}^{(1)}(\rho_{1})=\Tr_{2}\{\mathcal E_{T}(\rho_{1}\otimes\rho_{2})\}$ with $\rho_{1}$ and $\rho_{2}$ the initial states of systems 1 and 2, respectively, and $\Tr_{2}$ the partial trace over system 2. We wish to optimize $\mathcal E_{T}$ such that $\mathcal E_{T}^{(1)}$ becomes some goal unitary map $\mathcal E_{G}^{(1)}=\mathcal U_{G}$ with $\mathcal U_{G}(\rho)=U_{G}\rho U_{G}^\dag$ and $U_{G}\in\text{SU}(d)$. 
Our measure of error $\varepsilon_1$ in~(\ref{eq:gateerror}), however, depends also on how the channels $\mathcal{E}_T$ and $\mathcal{E}_G$ act on system 2: even if $\mathcal{E}_T^{(1)}$ coincides with the goal unitary $\mathcal E_{G}^{(1)}=\mathcal U_{G}$, the total maps $\mathcal{E}_T$ and $\mathcal{E}_G$ can be different and our measure of error $\varepsilon_1$ can be nonvanishing.
In addition, the reduced map $\mathcal{E}_T^{(1)}$ depends on the initial state of system 2. 
We notice, on the other hand, that since the goal operation on system 1 is unitary $\mathcal{U}_G$ the total goal operation must factorize $\mathcal E_{G}=\mathcal U_{G}\otimes \tilde{\mathcal E}$ with $\tilde{\mathcal E}$ an arbitrary CPTP map acting on system 2.
What is more relevant is how close the reduced channel $\mathcal{E}_T^{(1)}$ is to the goal unitary $\mathcal{U}_G$.
Therefore it would be more appropriate to perform an additional minimization of $\varepsilon_1$ in~(\ref{eq:gateerror}) over $\tilde{\mathcal E}$.
To obtain the subsystem fidelity for purely unitary channels this minimization can be carried out analytically~\cite{UnitaryM1, UnitaryM2} but unfortunately for arbitrary CPTP channels this is a challenging task. 
Instead we use the normalized Choi representation $J(\mathcal E)$ of a quantum channel $\mathcal E$~\cite{WolfGT} to derive a lower bound of $\varepsilon_1$,
\begin{align}
\label{eq:epsilon2}
\varepsilon_1/d^2&
=\|J(\mathcal E_{T})-S(J(\mathcal U_{G})\otimes J(\tilde{\mathcal E}))S   \|_{\text{HS}}^{2} \nonumber \\
&\geq \Tr\{J^{2}(\mathcal E_{T})(\openone- S(J(\mathcal U_{G})\otimes\openone_{2})S)\}  
\equiv \varepsilon_{2},
\end{align}
where the \textsc{swap} operator $S$ between systems 1 and 2 is introduced because in general for two CPTP maps $\Phi_{1}$ and $\Phi_{2}$, $J( \Phi_{1}\otimes \Phi_{2})=S (J(\Phi_{1})\otimes J(\Phi_{2}))S$. 
For details of the derivation of the lower bound~\eqref{eq:epsilon2} we refer to Appendix \ref{lowerbound}. Clearly the minimization over $\tilde{\mathcal E}$ on the left-hand side of~\eqref{eq:epsilon2} is now lower bounded by $\varepsilon_{2}$, which is independent of $\tilde{\mathcal E}$ and is zero if and only if the goal unitary operation on system 1 is reached. Thus the lower bound becomes tighter and tighter when $\mathcal E_{T}$ factorizes into the goal unitary $\mathcal{U}_G$ on system 1 and some arbitrary $\tilde{\mathcal E}$ on system 2.

\begin{figure}[t]
 \includegraphics[width=0.90\columnwidth]{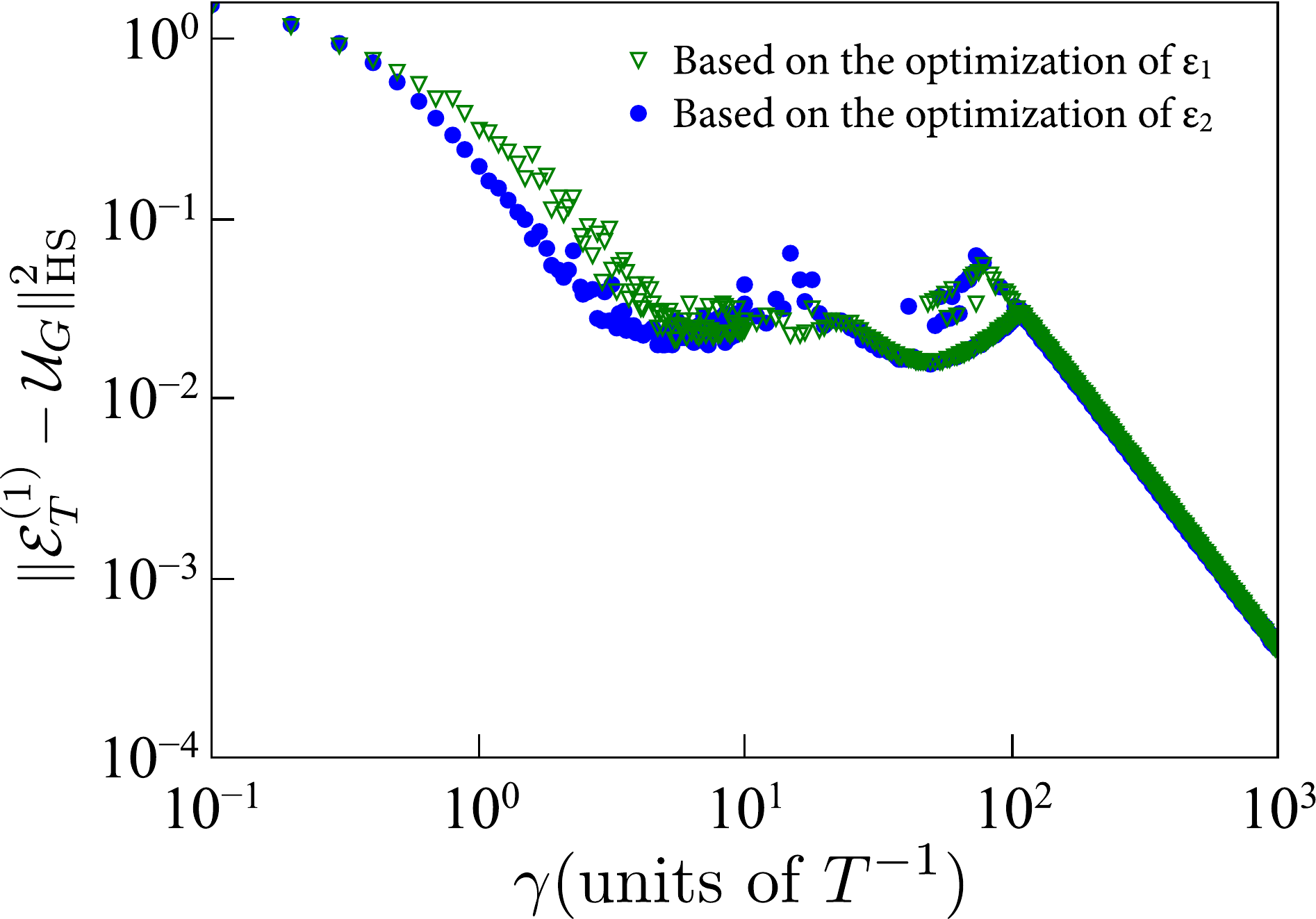}
 \caption{\label{fig:fid} (Colour online) Numerical gate optimization for the two-qubit model in Sec.\ \ref{ExamplesTwoQ} with the amplitude damping~\eqref{eq:amplitudedamping2} for different values of $\gamma$.
 The gate error between the reduced dynamics $\mathcal E_{T}^{(1)}$ and the Hadamard gate on qubit 1 obtained from the numerical minimizations of $\varepsilon_{1}$ (green triangles) and $\varepsilon_{2}$ (blue points) for different values of $\gamma$ with gate time $T=1$. Qubit 2 is initially prepared in the totally mixed state, and for $\varepsilon_{1}$, $\tilde{\mathcal E}$ is chosen to be the superprojection $\mathcal{P}$ that brings qubit 2 into the ground state $\ket{0}$. To reduce the effect of local minima in the minimum value 100 randomly chosen initial pulses are taken.}
\end{figure}

The strategy to study the convergence of the map to the goal operation as $\gamma$ is increased can now be summarized as follows. We implement $\varepsilon_{2}$ and its gradient with respect to the control fields on QuTip, and minimize $\varepsilon_{1}$ and $\varepsilon_{2}$ for different values of $\gamma$. For $\varepsilon_{1}$, $\tilde{\mathcal E}$ is chosen to be the superprojection $\mathcal{P}$ in~(\ref{eqn:PTwoQubitAmp}) that brings qubit 2 into the ground state $\ket{0}$. 
On the basis of the minimizations of $\varepsilon_{1}$ and $\varepsilon_{2}$ we evaluated in Fig.\ \ref{fig:fid} the gate error $\|\mathcal E_{T}^{(1)}-\mathcal U_{G}\|_{\text{HS}}^{2}$ by specifying the initial state of qubit 2 in the totally mixed state and tracing out the auxiliary degrees of freedom. The target unitary operation $U_{G}$ on qubit 1 was chosen to be the Hadamard gate. We observe that despite the enhanced freedom in $\varepsilon_{2}$ the curves based on the minimizations of $\varepsilon_1$ and $\varepsilon_2$ are similar to each other. For noise strengths above $\gamma \approx 10\,T^{-1}$ gate errors below $10^{-1}$ can be reached, corresponding to the upper bound $0.2$ for the diamond norm. 
It demonstrates that with intermediate noise strengths reasonable fidelity can be reached.

Besides being fundamentally interesting we now want to discuss in more detail the experimental feasibility of this observation. Together with controlling commuting interactions, the main ingredient of the observed behavior is a strong dissipative process and the emergence of DFS's. Thus, in practice, we need a system containing a subset of states that are stable on an appropriate time scale and a dissipative process decaying into this subset, while being much faster than other noise processes. An attractive platform that provides such a noise process is waveguide QED, i.e., the interaction of quantum emitters with the modes of a waveguide, such as photonic crystal waveguides \cite{Goban}, optical fibers \cite{Mitsch}, and superconducting circuits \cite{Mlynek}. For further details regarding waveguide QED we refer to \cite{ref:Kimble} and references therein. In particular, in such systems the presence of collective decoherence described by a Lindbladian of the form \eqref{eqn:CollectiveDeco} gives rise to DFS's and moreover the high density of modes of the waveguide yields regions in which large decay rates are achieved. Recently, the ability to implement universal gates in such systems over a DFS was studied in detail in \cite{ref:Kimble}. While in this study weak driving fields with a constant envelope were used to implement a target unitary gate, we remark that there is no fundamental restriction of using time-dependent controls to implement the ideas that are proposed here similarly. Indeed, it was shown that as long as the distance between the quantum emitters in the waveguide is small, the gate error $\varepsilon$ for implementing a specific gate over the DFS scales as $\varepsilon\propto1/\sqrt{F}$ \cite{ref:Kimble}. Here $F=\gamma/\gamma^*$ is the Purcell factor given by the ratio of the decay rate $\gamma$ into the DFS and the decay rate $\gamma^{*}$ of other noise channels,  such that a unitary gate can accurately be implemented if the Purcell factor is reasonably large. Together with the ability to individually address transitions of the embedded quantum emitters waveguide QED systems provide therefore a promising platform for noise-induced universal quantum computation.

\section{Conclusions}
We showed that every dissipative process exhibiting a DFS can enlarge the set of unitary operations that can be implemented by means of classical control fields. We provided three examples for which a universal set of gates can be implemented over a DFS whereas over the original Hilbert space only ``simple'' operations are possible. In particular we showed that a realistic noise model can map a commutative classical system into a universal quantum one. Numerical gate optimization was performed to study how strong the dissipative process needs to be to implement some unitary gate over the DFS with high precision. As a result a subsystem fidelity for open quantum systems was developed. 
Our results pave the way to experimental feasibility studies in noisy systems such as quantum emitters in a waveguide.

\acknowledgments
We thank Thomas-Schulte Herbr\"uggen for fruitful discussions, HPC Wales for providing a high performance computer cluster on which the numerical simulations were done, and Alexander James Pitchford for advice regarding the numerical simulations and developing QuTip. CA acknowledges financial support from a HPC Wales bursary and the Aberystwyth Doctoral Career Development Scholarship. DB acknowledges support from the EPSRC grant EP/M01634X/1. 
KY is supported by the Grant-in-Aid for Scientific Research (C) (No.\ 26400406) from the Japan Society for the Promotion of Science (JSPS) and by the Waseda University Grants for Special Research Projects (No.\ 2015K-202 and No.\ 2016K-215).
This work was also supported by the Top Global University Project from the Ministry of Education, Culture, Sports, Science and Technology (MEXT), Japan, by the Italian National Group of Mathematical Physics (GNFM-INdAM), by INFN through the project ``QUANTUM,'' and by PRIN Grant No.\ 2010LLKJBX ``Collective quantum phenomena: from strongly correlated systems to quantum simulators.''

\appendix
\section{Characterization of $\bm{\mathfrak L_{\text{DFS}}}$ for the Qubit Chain Model}
\label{app:ChainLemmaLiealg}
We prove the following lemma:
\begin{lemma}
The Lie algebra $\mathfrak L_{\text{DFS}}$ generated by the two projected Hamiltonians $\mathcal{P}(H_0)$ and $\mathcal{P}(H_1)$ in~(\ref{eqn:Heisenberg}) includes all the rotationally symmetric two- and three-body operators, $H_{mn}=\bm{\sigma}^{(m)}\cdot\bm{\sigma}^{(n)}$ and $H_{ijk}=\bm{\sigma}^{(i)}\cdot(\bm{\sigma}^{(j)}\times\bm{\sigma}^{(k)})$ ($m<n$; $i<j<k$; $m,n,i,j,k=1,\ldots,N$) defined in~(\ref{eqn:TwoThreeBody}), for any number of qubits $N\ge3$.
\end{lemma}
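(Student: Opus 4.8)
The plan is to work throughout with the rotationally symmetric Hermitian operators appearing in the statement (identifying $\mathfrak{L}_{\text{DFS}}$ with the corresponding real space of such operators, the overall factors of $\i$ being immaterial since brackets preserve rotational invariance). Write $H_{mn}=\bm{\sigma}^{(m)}\cdot\bm{\sigma}^{(n)}$ and, for an ordered triple of distinct labels, $A_{abc}=\bm{\sigma}^{(a)}\cdot(\bm{\sigma}^{(b)}\times\bm{\sigma}^{(c)})$, which is cyclically invariant and equals $\pm H_{ijk}$ after reordering. Since the generated Lie algebra is unchanged when the generators are rescaled, and $\mathcal{P}(H_0)=\tfrac13 S$ with $S:=\sum_{n=1}^{N-1}H_{n,n+1}$, $\mathcal{P}(H_1)=\tfrac13 H_{12}$, the data we start from is $S,H_{12}\in\mathfrak{L}_{\text{DFS}}$. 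I would first reduce the whole lemma to a single statement: \emph{every nearest-neighbour bond $H_{n,n+1}$ lies in $\mathfrak{L}_{\text{DFS}}$}. Given this, every two-body $H_{mn}$ follows by induction on $n-m$ (pick $m<p<n$; then $A_{mpn}\propto[H_{mp},H_{pn}]$ is built from two shorter bonds, and a second commutator of $A_{mpn}$ with $H_{mp}$ returns $H_{mn}$ up to the already-available $H_{pn}$), and then every three-body $H_{ijk}$ follows because $H_{ijk}\propto[H_{ij},H_{jk}]$.

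The toolkit is a short list of commutator identities, each a one-line computation from $[\sigma_a,\sigma_b]=2\i\epsilon_{abc}\sigma_c$, valid for distinct qubit labels: (a) $[H_{ij},H_{jk}]=-2\i A_{ijk}$, so two bonds sharing a vertex fuse into a three-body term; (b) $[A_{abc},H_{ab}]=4\i(H_{ac}-H_{bc})$ and, by cyclicity, $[A_{abc},H_{bc}]=4\i(H_{ab}-H_{ac})$, so a three-body term together with one of its bonds returns a difference of the other two bonds; (c) $[H_{ab},H_{cd}]=0$ and $[A_{abc},H_{de}]=0$ whenever the index sets are disjoint; (d) with $Y_{abcd}:=H_{ad}H_{bc}-H_{ac}H_{bd}$ one has $[A_{abc},H_{cd}]=2\i Y_{abcd}$, $[H_{ab},A_{acd}]=2\i Y_{abcd}$ and $[H_{ab},A_{bcd}]=-2\i Y_{abcd}$, i.e.\ commutators that respectively create and annihilate the four-body term $Y_{abcd}$.

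For the reduced claim I would induct on $k$, the hypothesis being $H_{12},H_{23},\dots,H_{k-1,k}\in\mathfrak{L}_{\text{DFS}}$; the base $k=2$ is given. For the step with $2\le k\le N-2$, the hypothesis makes $T_k:=S-\sum_{n=1}^{k-1}H_{n,n+1}=H_{k,k+1}+H_{k+1,k+2}+\cdots$ available, and, using (c) to drop disjoint pieces, one produces successively
\begin{align*}
&A_{k-1,k,k+1}\propto[H_{k-1,k},T_k],\qquad
H_{k-1,k+1}-H_{k,k+1}\propto[A_{k-1,k,k+1},H_{k-1,k}],\\
&A_{k-1,k+1,k+2}-A_{k,k+1,k+2}\ \ \text{from}\ \ [T_k,\,H_{k-1,k+1}-H_{k,k+1}]\ \ (\text{mod }A_{k-1,k,k+1}),\\
&Y_{k-1,k,k+1,k+2}\propto[H_{k-1,k},\,A_{k-1,k+1,k+2}-A_{k,k+1,k+2}],
\end{align*}
the last line using the two halves of (d). Finally $[A_{k-1,k,k+1},T_k]=4\i(H_{k-1,k}-H_{k-1,k+1})+2\i Y_{k-1,k,k+1,k+2}$, with $Y_{k-1,k,k+1,k+2}$ and $H_{k-1,k}$ now in hand, yields $H_{k-1,k+1}\in\mathfrak{L}_{\text{DFS}}$, and then $H_{k,k+1}\in\mathfrak{L}_{\text{DFS}}$ from the second relation above. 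The boundary case $k=N-1$ is trivial, since then $T_{N-1}=H_{N-1,N}$ is a single term; and $N=3$ is covered directly ($H_{23}=S-H_{12}$, then $H_{13}$ and $H_{123}$ from (a)--(b)). This closes the induction and hence the lemma.

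The one genuinely non-mechanical point --- and the step I expect to demand the most care --- is the appearance of four-body operators: commuting $T_k$ against $H_{k-1,k}$ and $A_{k-1,k,k+1}$ inevitably spits out, alongside the wanted bond differences, a term $\propto Y_{k-1,k,k+1,k+2}$ that no commutator of bonds and triples can remove on its own (as the main text already observes, such commutators are always antisymmetric in some pair of the qubits they involve). The resolution is that this very $Y_{k-1,k,k+1,k+2}$ is generated independently --- through the \emph{difference} $A_{k-1,k+1,k+2}-A_{k,k+1,k+2}$, which itself emerges from $[T_k,H_{k-1,k+1}-H_{k,k+1}]$, and then identity (d) --- so it can be subtracted off. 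Making this work requires pinning down the precise coefficients and signs in (b) and (d) and verifying that the two three-body pieces in $[T_k,H_{k-1,k+1}-H_{k,k+1}]$ really occur with opposite signs, so that it is their difference (and not their sum, which $\mathfrak{L}_{\text{DFS}}$ would not receive) that is produced; the remainder is routine $\mathfrak{su}(2)$ bookkeeping together with the disjoint-support vanishing in (c).
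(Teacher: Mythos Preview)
Your argument is correct (modulo a harmless sign slip in identity (d): one actually has $[H_{ab},A_{acd}]=-2\i Y_{abcd}$ and $[H_{ab},A_{bcd}]=+2\i Y_{abcd}$, but the difference you form in step 9 is still $\propto Y_{k-1,k,k+1,k+2}$, so nothing breaks). The reduction to nearest-neighbour bonds, the induction on $k$, and the handling of the four-body obstruction $Y_{k-1,k,k+1,k+2}$ by generating it separately and subtracting, all check out.

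The paper's proof follows the same overall inductive scheme but is considerably shorter because it never meets a four-body term. Two differences account for this. First, the paper carries a stronger induction hypothesis --- all two- and three-body operators on the first $n$ qubits --- and commutes $H_{(n-1)n}$ against the full $\tilde H_0$ rather than the tail $T_k$; since $H_{(n-2)(n-1)n}$ is already in hand, the new three-body piece $H_{(n-1)n(n+1)}$ is isolated immediately. Second, and more importantly, the paper breaks the difference $H_{(n-1)(n+1)}-H_{n(n+1)}$ by commuting \emph{twice} with the same three-body operator:
\[
\i\bigl[\i[H_{(n-1)n},H_{(n-1)n(n+1)}],H_{(n-1)n(n+1)}\bigr]
=16\bigl(H_{(n-1)(n+1)}+H_{n(n+1)}-2H_{(n-1)n}\bigr),
\]
which, together with the single commutator $\i[H_{(n-1)n},H_{(n-1)n(n+1)}]=4(H_{(n-1)(n+1)}-H_{n(n+1)})$, yields both bonds individually. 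This double-commutator trick is exactly what circumvents the ``only antisymmetric combinations arise'' obstruction you identify; you resolve the same obstruction by a legitimate but longer detour through $Y_{abcd}$. Your route has the minor virtue of exhibiting explicitly that certain four-body combinations do lie in $\mathfrak L_{\text{DFS}}$, but the paper's is cleaner and requires only the identities you already listed under (b).
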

\begin{proof}
Let us introduce
\begin{equation}
\tilde{H}_0
=\mathcal{P}(H_0),\qquad
\tilde{H}_1
=\mathcal{P}(H_1).
\end{equation}
The first commutator reads
\begin{equation}
\i [\tilde{H}_0,\tilde{H}_1]
=2H_{123}.
\end{equation}
Then, by commuting $\tilde{H}_1=H_{12}$ with the newly generated $H_{123}$ twice, we have
\begin{subequations}
\begin{align}
\i [H_{12},H_{123}]
&=4(H_{13}-H_{23}),
\\
\i\bm{[}\i[H_{12},H_{123}],H_{123}\bm{]}
&=16(H_{13}+H_{23}-2H_{12}),
\end{align}
\end{subequations}
from which we gain $H_{13}$ and $H_{23}$.
All the rotationally symmetric operators up to the third qubit (three two-body operators $H_{12}$, $H_{23}$, $H_{13}$ and a three-body operator $H_{123}$) are in our hands.

For $N\ge4$, we proceed by induction.
Suppose that all the rotationally symmetric two- and three-body operators for the first $n$ qubits are at our disposal.
It is actually the case for $n=3$, as we saw above.
Then, we are able to extend one qubit further, generating all the two- and three-body operators involving the $(n+1)$th qubit by the following procedure.

\bigskip
\noindent
1. Commute $H_{(n-1)n}$ with $\tilde{H}_0$ to extend to the $(n+1)$th qubit,
\begin{equation}
\i [H_{(n-1)n},\tilde{H}_0]
=-2(
H_{(n-2)(n-1)n}
-
H_{(n-1)n(n+1)}
).
\end{equation}
We acquire $H_{(n-1)n(n+1)}$.

\bigskip
\noindent
2. By commuting $H_{(n-1)n}$ with the newly generated $H_{(n-1)n(n+1)}$ twice, we have
\begin{subequations}
\begin{align}
&\i [H_{(n-1)n},H_{(n-1)n(n+1)}]
=4(H_{(n-1)(n+1)}-H_{n(n+1)}),\\
&\i\bm{[}\i [H_{(n-1)n},H_{(n-1)n(n+1)}],H_{(n-1)n(n+1)}\bm{]}
\nonumber\\
&\qquad
=16(H_{(n-1)(n+1)}+H_{n(n+1)}-2H_{(n-1)n}),
\end{align}
\end{subequations}
from which we gain $H_{(n-1)(n+1)}$ and $H_{n(n+1)}$.

\bigskip
\noindent
3. Then, iterate the following steps for $m=n-2,n-3,\ldots,1$,
\begin{subequations}
\begin{align}
\i [
H_{m(m+1)},H_{(m+1)(n+1)}
]
&=2H_{m(m+1)(n+1)},
\\
\i [
H_{m(m+1)},H_{m(m+1)(n+1)}
]
&=4(H_{m(n+1)}-H_{(m+1)(n+1)}),
\end{align}
\end{subequations}
to get $H_{m(n+1)}$ ($m=1,\ldots,n-2$).
All the two-body operators involving the $(n+1)$th qubit are thus in our hands.

\bigskip
\noindent
4. Combining the two-body operators, we can generate any three-body operators involving the $(n+1)$th qubit,
\begin{multline}
\i [
H_{m_1m_2},H_{m_2(n+1)}
]
=2H_{m_1m_2(n+1)}\\
(m_1,m_2=1,\ldots,n;\ m_1<m_2\le n).
\end{multline}

\bigskip
\noindent
In this way, all the rotationally symmetric two- and three-body operators for the first $n+1$ qubits are generated.
Then, by induction, we can generate all the rotationally symmetric two- and three-body operators for any number of qubits $N$.
\end{proof}

\section{Asymptotic Dimension of the Lie Algebra $\bm{\mathfrak{L}_\text{DFS}}$ for the Qubit Chain Model}
\label{app:LieChain}
Let us estimate the asymptotic dimension for a large $N$ of the Lie algebra $\mathfrak{L}_\text{DFS}$ in~(\ref{eqn:LieChainSU}) generated by the projected Hamiltonians $\mathcal{P}(H_0)$ and $\mathcal{P}(H_1)$ for the chain of $N$ qubits discussed in Sec.\ \ref{sec:Chain}.
As commented in Sec.\ \ref{sec:Chain}, the dimension of $\mathfrak{L}_\text{DFS}$ is bounded by the dimension of $\bigoplus_J\mathfrak{su}(d_{J,N})$ and the dimension of $\bigoplus_J\mathfrak{u}(d_{J,N})$, i.e.,
\begin{equation}
\sum_J(d_{J,N}^2-1)
<\dim\mathfrak{L}_\text{DFS}
<\sum_Jd_{J,N}^2.
\end{equation}
As we will see, the lower bound is dominated by the first contribution $\sum_Jd_{J,N}^2$ for large $N$, and the difference between the lower and upper bounds becomes relatively negligible in the asymptotic regime.
Observe also that the dimensions $d_{J,N}$ of the DFS's given in~(\ref{eqn:dJN}) can be cast as
\begin{multline}
d_{J,N}
=\left(1-\frac{2K}{N+1}\right)
\begin{pmatrix}
\smallskip
N+1\\
K
\end{pmatrix}
\\
(K=N/2-J=0,1,\ldots,\lfloor\tfrac{N}{2}\rfloor),
\end{multline}
where $\lfloor x\rfloor$ denotes the largest integer not greater than $x$.
Approximating the binomial coefficient by
\begin{equation}
\begin{pmatrix}
\smallskip
n\\
k
\end{pmatrix}
=\frac{2^n}{\sqrt{\pi n/2}}\e^{-2n(k/n-1/2)^2}[
1+O(1/\sqrt{n})
],
\end{equation}
the dimension of the Lie algebra is estimated as
\begin{align}
\dim\mathfrak{L}_\text{DFS}
&\sim\sum_J d_{J,N}^2
\nonumber\\
&=\sum_{K=0}^{\lfloor N/2\rfloor}
\left(1-\frac{2K}{N+1}\right)^2
\begin{pmatrix}
\smallskip
N+1\\
K
\end{pmatrix}^2
\nonumber\\
&\sim
\frac{N+1}{2}
\int_0^1dx
\,x^2
\frac{4^{N+1}}{\pi(N+1)/2}\e^{-(N+1)x^2}
\nonumber\displaybreak[0]\\
&\sim
\frac{4^N}{\sqrt{\pi}N^{3/2}},
\end{align}
where the continuum limit is taken through $x=1-2K/(N+1)$.

\section{Derivation of the Lower Bound $\bm{\varepsilon_{2}}$}
\label{lowerbound}
Here we derive the lower bound~\eqref{eq:epsilon2}. Using the definition of the Hilbert-Schmidt norm $\|A\|_{\text{HS}}^{2}=\Tr\{A^{\dagger}A\}$ for a matrix $A$ we can rewrite the left-hand side of~\eqref{eq:epsilon2},
\begin{align}
&\|J(\mathcal E_{T})-S(J(\mathcal U_{G})\otimes J(\tilde{\mathcal E}))S   \|_{\text{HS}}^{2}\nonumber \\ 
&\quad
=\Tr\{J^{2}(\mathcal E_{T})\}+\Tr_2\{J^{2}(\tilde{\mathcal E})\}
\nonumber\\
&\quad\qquad
{}-2\Tr\{SJ(\mathcal E_{T})S(J(\mathcal U_{G})\otimes J(\tilde{\mathcal E}))\},
\label{eq:frobnorm}
\end{align}
where $\Tr_{2}$ denotes the partial trace over the second system and the properties of the normalized Choi state $J$ were used, i.e., $J^{\dagger}=J$,  $\Tr\{J\}=1$, and $J^{2}=J$ for a unitary map. The third term of the right-hand side of~\eqref{eq:frobnorm} can be rewritten as 
\begin{align}
&\Tr\{SJ(\mathcal E_{T})S(J(\mathcal U_{G})\otimes J(\tilde{\mathcal E}))\}\nonumber\\
&\quad
=\Tr\{SJ(\mathcal E_{T})S(J(\mathcal U_{G})\otimes\openone_{2})   (J(\mathcal U_{G})\otimes J(\tilde{\mathcal E}))\}\nonumber \\
&\quad
\leq \Tr\{SJ^{2}(\mathcal E_{T})S(J(\mathcal U_{G}\otimes\openone_{2}))\}^{1/2}\Tr_{2}\{J^{2}(\tilde{\mathcal E})\}^{1/2}\nonumber \\
&\quad
\leq \frac{1}{2}\left(\Tr\{J^{2}(\mathcal E_{T})S(J(\mathcal U_{G})\otimes\openone_{2})S\}    +\Tr_{2}\{J^{2}(\tilde{\mathcal E})\}\right),
\label{eq:inequality1}
\end{align}
where from the second line to the third the Cauchy-Schwarz inequality and from the third line to the fourth the inequality between the arithmetic and the geometric means have been used. Combining~\eqref{eq:frobnorm} and~\eqref{eq:inequality1} we arrive at 
\begin{align}
&\|J(\mathcal E_{T})-S(J(\mathcal U_{G})\otimes J(\tilde{\mathcal E}))S   \|_{\text{HS}}^{2}\nonumber \\ 
&\quad
\geq  \Tr\{J^{2}(\mathcal E_{T})\}+\Tr_{2}\{J^{2}(\tilde{\mathcal E})\}\nonumber \\
&\quad\qquad
{} - \Tr\{J^{2}(\mathcal E_{T})S(J(\mathcal U_{G})\otimes\openone_{2})S\}    -\Tr_{2}\{J^{2}(\tilde{\mathcal E})\}\nonumber \\
&\quad
=\Tr\{J^{2}(\mathcal E_{T})(\openone- S(J(\mathcal U_{G})\otimes\openone_{2})S)\},
\end{align}
which is the desired result. Note that for pure unitary maps $\mathcal E_{T}=\mathcal U_{T}$ the lower bound simplifies further 
\begin{align}
\varepsilon_{2}=1-\Tr\{J(\mathcal U_{T})S(J(\mathcal U_{G})\otimes\openone_{2})S\}.
\end{align}


\end{document}